\newcommand\eat[1]{}
\newlength{\wordlength}
\newcommand{\eqclass}[2][]{\ifthenelse{\equal{#1}{}}{[#2]}{[#2]_{\sim_{#1}}}}
\newcommand{\Pref}[1][]{
	\ifthenelse{\equal{#1}{}}{\mathrel R}{\mathop{R_{#1}}}
}                                          
\newcommand{\sPref}[1][]{                  
	\ifthenelse{\equal{#1}{}}{\mathrel P}{\mathop{P_{#1}}}
}                                          
\newcommand{\Indiff}[1][]{                 
	\ifthenelse{\equal{#1}{}}{\mathrel I}{\mathop{I_{#1}}}
}
\newcommand{\prefset}[1][]{\ifthenelse{\equal{#1}{}}{\mathcal{R}}{\mathcal{R}_{#1}}}
	\newcommand{\MG}{MG}
\title{Shapley Meets Shapley}
\author{%
	Haris Aziz \and Bart de Keijzer}
\institute{%
NICTA and University of New South Wales, Sydney, NSW 2033, Australia
	\email{haris.aziz@nicta.com.au}
	\and
Centrum Wiskunde \& Informatica (CWI), 1098 XG, Amsterdam, The Netherlands \email{keijzer@cwi.nl}
}
\begin{document}

\maketitle

\begin{abstract}
This paper concerns the analysis of the \emph{Shapley value} in \emph{matching games}.
Matching games constitute a fundamental class of cooperative games which help understand and model auctions and assignments. In a matching game, the value of a coalition of vertices is the weight of the maximum size matching in the subgraph induced by the coalition. The Shapley value is one of the most important solution concepts in cooperative game theory. 

After establishing some general insights, we show that the Shapley value of matching games can be computed in polynomial time for some special cases: graphs with maximum degree two, and graphs that have a small modular decomposition into cliques or cocliques (complete $k$-partite graphs are a notable special case of this). The latter result extends to various other well-known classes of graph-based cooperative games.

We continue by showing that computing the Shapley value of unweighted matching games is $\mathsf{\#P}$-complete in general. Finally, a fully polynomial-time randomized approximation scheme (FPRAS) is presented. This FPRAS can be considered the best positive result conceivable, in view of the $\mathsf{\#P}$-completeness result.
\end{abstract}

\section{Introduction}

In economics and computer science, one of the most fundamental problems is the allocation of profits or costs based on contributions of the nodes in a network. The problem has assumed even more importance 
as networks have become ubiquitous.
In this paper, we address this problem by simultaneously studying two concepts that can be traced to Lloyd S. Shapley --- the \emph{Shapley value} and \emph{matching games}.

%
%



%
%
%
%
%
%
%
%
%
%
%
%
%

Lloyd S. Shapley is one of the most influential game theorists in history. Among his numerous contributions, two of them are the following: (i) formulating the \emph{assignment game} as a rich and versatile class of cooperative games~\citep{ShSh72a}, and (ii) proposing the \emph{Shapley value} as a highly desirable solution concept for cooperative games~\citep{Shap53c}. Both contributions have had far-reaching impact and were part of Shapley's Nobel Prize winning achievements.
The assignment game is a cooperative game based on bipartite graphs, and models the interaction between buyers and sellers. It is the \emph{transferable utility} version of the well-known stable marriage setting and is a fundamental model that is used for modelling exchange markets and auctions ~\citep{RoSo90a}. 
Assignment games were later generalized to \emph{matching games}, for \emph{non}-bipartite graphs~\citep[see \eg][]{DIN99a,KePa03a}. The main idea of a matching game is that each node represents an agent and the value of a coalition of nodes is the weight of the maximum weight matching in the subgraph induced by the coalition of nodes.
Whereas the matching game is one of the most natural and important cooperatives game, the Shapley value has been termed ``the most important normative payoff division scheme'' in cooperative game theory~\citep{Wint02a}. It is based on the idea that the payoff of an agent should be proportional to his marginal contributions to the payoff for the set of all players.
For an excellent overview of the concept, we refer the reader to \citep[Chapter 5, ][]{Moul03a}.
The Shapley value is the only solution concept that satisfies simultaneously the following properties: efficiency, symmetry, additivity, and dummy player property. 

In this paper we address a gap in the computational cooperative game theory literature, and we initiate research on the computational aspects of the Shapley value in matching games. This gap is surprising on two fronts: (i) computational aspects of Shapley values have been extensively studied for a number of cooperative games~\citep[see \eg][]{DePa94a,IeSh05a,EGGW09a}. and (ii) matching games are a well-established class of cooperative games, and the structure and computational complexity of computing important solution concepts such as the core, least core, and nucleolus have been examined in-depth for matching games~\citep[see \eg][]{AlGa90a,SoRa94a,KePa03a,DeFa08a}.




\paragraph{Our results.}
We study the algorithmic aspects and computational complexity of the Shapley value for matching games for the first time. 
We establish first some general insights and some particular special cases for which the exact Shapley value can be computed in polynomial time for: graphs with a constant size decomposition into clique and coclique modules (these include e.g., complete $k$-partite graphs, for $k$ constant), and for graphs with maximum degree two. The non-trivial algorithm required for graphs of maximum degree two illustrates that exact computation of the Shapley value quickly becomes rather complex, even for very simple graph classes.
%
We then move on to the central results of this paper, which concerns the general problem: We prove that the computational complexity of computing the Shapley value of matching games is \#P-complete even if the graph is unweighted. The proof relies on Berge's Lemma and the fact that a certain matrix related to the Pascal triangle has a non-zero determinant.
We subsequently present an \emph{FPRAS} (i.e., a \emph{fully polynomial time randomized approximation scheme}) for computing the Shapley value of (weighted) matching games. In view of our \#P-completeness result, the FPRAS is best possible result we can hope for. 
Due to space limitations, some proofs in this text have been deferred to the appendix.
\paragraph{Related Work.}
The complexity of computing the Shapley value of important classes of cooperative games has been the topic of detailed studies.
\citet{DePa94a} and \citet{IeSh05a} presented polynomial-time algorithms to compute the Shapley value of \emph{graph games} and 
\emph{marginal contribution nets} respectively.
On the other hand, computing the Shapley value is known to be intractable for a number of cooperative games ~\citep[see \eg][]{EGGW09a, ALPS09b}.  

%
Among the classes of cooperative games, matching games are one of the most well-studied. \citet{DIN99a} characterized the core of the matching games and showed that various problems regarding the core and the least core of matching games can be solved in polynomial time. 
For matching games, there has been considerable algorithmic research on the \emph{nucleolus}: an alternative single valued solution concept\citep[see \eg][]{SoRa94a,KePa03a}.

The Shapley value of a vertex in a matching game indicates the ability of a vertex to match with other vertices. It may thus also be viewed as a centrality index of a vertex. Centrality indices of graphs have received immense interest~\citep[see \eg][]{BrEr05a}.

\section{Preliminaries}\label{sec:preliminaries}

We work throughout this text with undirected weighted graphs $G = (N,E,w)$, where $N$ is the vertex set, $E$ is the edge set, and $w : E \rightarrow \mathbb{R}_{\geq 0}$ is a weight function.
For $S \subseteq N$, we denote by $G(S)$ the subgraph of $G$ induced by $S$, i.e., the graph $(S, \{e \in E : e \in S \times S\})$. Some essential basic notions related to graphs and matchings may be found in the appendix. We assume for the remainder of this text that the reader is familiar with these.

%

A \emph{cooperative game} consists of a set $N$ of $n = |N|$ players and a characteristic function $v : 2^N\rightarrow \mathbb{R}$ associating a value $v(S)$ to every subset $S\subseteq N$. A subset of $N$ is referred to as a \emph{coalition} in this context. A central question in the theory of cooperative games is to distribute the value $v(N)$ among the players in a fair and stable manner.

A \emph{matching game} is a cooperative game $(N,v)$ induced by an undirected weighted graph $G = (N,E,w)$ (with vertex set $N$, edge set $E$, and weight function $w: E\rightarrow \mathbb{R}_{\geq 0}$) such that for any $S \subseteq N$, $v(S)$ is the weight of a maximum weight matching of the subgraph $G(S)$. For a given graph $G$, we will denote by $\MG(G)$ the matching game corresponding to graph $G$. 

An \emph{unweighted} matching game is a matching game for which all weights are $1$ in the associated graph. In unweighted matching games, it holds that $v(S \cup \{i\}) - v(S) \in \{0,1\}$ for all  $S \subset N$, $i \in N \backslash S$. 
If, for an unweighted matching game $(N,v)$, a player $i \in N$, and a coalition $S \subseteq N \backslash \{i\}$, it holds that $v(S \cup \{i\}) = v(S)+1$, then we say that player $i$ is \emph{pivotal} (for coalition $S$, in game $(N,v)$). Similarly, if $\sigma : N \rightarrow N$ is a permutation on $N$, and $i$ is pivotal for set of players $p(i,\sigma) = \{j : \sigma^{-1}(j) < \sigma^{-1}(i)\}$ (i.e., the players occurring before $i$ in $\sigma$) is pivotal, then we say that $\sigma$ is pivotal for $i$. 

For the general case of weighted matching games, when $S$ is a coalition not containing player $i$, we refer to the value $v(S \cup \{i\}) - v(S)$ as \emph{the marginal contribution of $i$ to $S$.} When $\sigma$ is a permutation of $N$, we refer to the value $v(p(i,\sigma) \cup \{i\}) - v(p(i,\sigma))$ as \emph{the marginal contribution of $i$ to $\sigma$.}

The \emph{Shapley value} of a player $i\in N$ in a cooperative game $(N,v)$ is denoted by $\varphi_i(N,v)$, and is defined as follows.
\begin{equation}\label{eq:rawshapley1}
\varphi_i(N,v)=\kappa_i(N,v)/{|N|!}, \qquad \kappa_i(N,v)=\sum_{S\subseteq N\setminus \{i\}}(|S|!)(|N|-|S|-1)!(v(S\cup \{i\})-v(S)).
\end{equation}
$\kappa_i$ is called the \emph{raw Shapley value}. It is well-known and straightforward to obtain that the raw Shapley value can be written as
$\kappa_i(N,v) = \sum_{\sigma \in S_N} (v(p(i,\sigma) \cup \{i\}) - v(p(i,\sigma),)),$
where $S_N$ is the set of permutations on the player set $N$.
For an unweighted matching game, the raw Shapley value of a player is thus equal to the number of pivotal permutations.
We refer to the vectors $\varphi = (\varphi_1(N,v), \ldots \varphi_n(N,v))$ and $\kappa = (\kappa_1(N,v), \ldots, \kappa_n(N,v))$ respectively as the Shapley value and the raw Shapley value of the game \emph{$(N,v)$}.



The players $i,j\in N$ are called \emph{symmetric} in $(N,v)$ if $v(S\cup \{i\})=v(S\cup\{j\})$ for any coalition $S\subseteq N\setminus \{i,j\}$. A player $i\in N$ is a \emph{dummy} if $v(S\cup\{i\})-v(S)=0$ for all $S\subseteq N$. The Shapley value satisfies the following properties: 
(i) \emph{Efficiency}: $\sum_{i\in N}\varphi_i(N,v)= v(N)$;
(ii) \emph{Symmetry}: if $i,j\in N$ are symmetric, then $\varphi_i(N,v)=\varphi_j(N,v)$;
(iii) \emph{Dummy}: if $i$ is a dummy, then $\varphi_i(N,v)=0$;
(iv) \emph{Additivity}: $\varphi_i(N,v^1+v^2)=\varphi_i(N,v_1) + \varphi_i(N,v_2)$ for all $i\in N$;\footnote{The sum of two characteristic functions $v_1$ and $v_2$ on the same player set is defined in the standard way: as $v_1(S)+v_2(S)$ for all $S \subseteq N$.} and
(v) \emph{Anonymity}: relabeling the agents does not affect their Shapley value.
We are interested in the following computational problem.\\\\
\noindent
{\sc Shapley}  \\
\noindent
Instance: A weighted graph $G=(N,E,w)$ and a specified player $i\in V$\\
\noindent
Question: Compute $\varphi_i(\MG(G)).$

\subsection{General insights}
In this subsection, we gain some general insights about the Shapley value of matching games. 
First, if the graph is not connected, then the problem of computing the Shapley value of the graph reduces to computing the Shapley value of the respective connected components.
\begin{lemma}[Shapley value in connected components]\label{lemma:components}
Let $G = (N,E,w)$ be a weighted graph with $k$ connected components, and let the respective vertex sets of these connected components be $N_1, \ldots, N_k$. Let $v$ be the characteristic function of the matching game $\MG(G)$ on that graph, and let $c : N \rightarrow [k]$ be the function that maps a vertex $i$ to the number $k$ such that $i \in N_k$.\footnote{For $a \in \mathbb{N}$, we write $[a]$ to denote $\{b \in \mathbb{N} : 1 \leq b \leq a\}$.} Then, for every vertex $i$ it holds that $\varphi_i(v) = \varphi_i(v_{c(i)})$, where $v_j$ denotes the characteristic function of the matching game on the subgraph induced by $V_j$.
\end{lemma}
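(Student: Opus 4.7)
The plan is to exploit additivity of the Shapley value after decomposing the characteristic function along connected components. The central observation is that any matching in $G(S)$ decomposes uniquely into matchings within $G(S \cap N_j)$ for each $j \in [k]$, because no edge of $G$ has endpoints in two different components. Hence the weight of a maximum weight matching is additive across components, giving
\[
v(S) \;=\; \sum_{j=1}^{k} v_j(S \cap N_j) \qquad \text{for all } S \subseteq N.
\]
Define $\tilde v_j : 2^{N} \to \mathbb{R}$ by $\tilde v_j(S) = v_j(S \cap N_j)$, so that $v = \sum_{j=1}^{k} \tilde v_j$ as characteristic functions on the common player set $N$.

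Next I would invoke the additivity property of the Shapley value to write $\varphi_i(v) = \sum_{j=1}^{k} \varphi_i(\tilde v_j)$. For every $j \neq c(i)$, the vertex $i$ lies outside $N_j$, so $\tilde v_j(S \cup \{i\}) = \tilde v_j(S)$ for all $S$; hence $i$ is a dummy in $(N, \tilde v_j)$ and $\varphi_i(\tilde v_j) = 0$ by the dummy property. This reduces the claim to showing $\varphi_i(\tilde v_{c(i)}) = \varphi_i(v_{c(i)})$, i.e.\ that the Shapley value of $i$ in the game $(N, \tilde v_{c(i)})$ on the full vertex set equals its Shapley value in the restricted game $(N_{c(i)}, v_{c(i)})$.

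For this last step I would use the probabilistic reformulation of the Shapley value as the expected marginal contribution of $i$ under a uniformly random arrival order. Since $\tilde v_{c(i)}$ depends on a coalition only through its intersection with $N_{c(i)}$, the marginal contribution of $i$ in a permutation $\sigma$ of $N$ depends only on the set $p(i,\sigma) \cap N_{c(i)}$. When $\sigma$ is uniform on $N$, the relative order of the elements of $N_{c(i)}$ inside $\sigma$ is uniform on permutations of $N_{c(i)}$, so $p(i,\sigma) \cap N_{c(i)}$ has the same distribution as $p(i,\tau)$ for a uniform random permutation $\tau$ of $N_{c(i)}$. Therefore the two expectations coincide, yielding $\varphi_i(\tilde v_{c(i)}) = \varphi_i(v_{c(i)})$ and completing the proof.

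The only step that requires any care is the last one, since one must argue that ``padding'' a game with dummy players leaves the Shapley value of non-dummies unchanged; however, this is a standard consequence of the dummy and efficiency axioms (or of the probabilistic interpretation above), and does not pose a genuine obstacle. Everything else is just the additive decomposition of matchings across components together with the Shapley axioms.
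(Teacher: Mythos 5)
Your proposal is correct and follows essentially the same route as the paper: decompose $v$ additively into per-component games on the common player set $N$, apply the additivity axiom, and eliminate all terms except $\tilde v_{c(i)}$ via the dummy property. The only difference is in the final ``padding with dummies'' step, where the paper verifies $\varphi_i(V)=\varphi_i(V\cup\{j\})$ by a direct algebraic manipulation of the combinatorial formula (adding one dummy player at a time and inducting), whereas you use the random-arrival-order interpretation; both arguments are valid.
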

It is rather straightforward to see that a vertex has a Shapley value zero if and only if it is not connected to any other vertex.
\begin{observation}\label{zeroshapley}
A player in a matching game has a non-zero Shapley value if and only if there is an edge in the graph that contains the player. It can thus be decided in linear time whether a player in a matching game has a Shapley value of zero.
\end{observation}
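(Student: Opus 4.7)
The proof plan rests on two straightforward facts: the monotonicity of the matching value and the strict positivity of the coefficients in the Shapley formula.

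For the forward direction (non-zero Shapley value implies an incident edge), I would argue contrapositively. If vertex $i$ has no incident edge in $G$, then for every coalition $S \subseteq N \setminus \{i\}$, the vertex $i$ is isolated in $G(S \cup \{i\})$, so any maximum weight matching in $G(S \cup \{i\})$ is also a matching in $G(S)$, giving $v(S \cup \{i\}) = v(S)$. Every marginal contribution in the sum defining $\kappa_i$ in Equation~\eqref{eq:rawshapley1} is therefore zero, and so $\varphi_i(\MG(G)) = 0$.

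For the backward direction, assume $i$ is incident to some edge $\{i,j\}$ (which, as an edge, has positive weight). Take $S = \{j\}$. Then $v(S) = 0$, while $v(S \cup \{i\}) \geq w(\{i,j\}) > 0$. Thus the marginal contribution of $i$ to $S$ is strictly positive. Moreover, because the matching value is monotone in the coalition, every marginal contribution $v(S \cup \{i\}) - v(S)$ appearing in \eqref{eq:rawshapley1} is non-negative. Since all coefficients $(|S|!)(|N|-|S|-1)!$ in \eqref{eq:rawshapley1} are strictly positive, and at least one term (the one for $S = \{j\}$) is strictly positive, we conclude $\varphi_i(\MG(G)) > 0$.

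The linear-time claim is immediate: given the input graph in any standard representation, one just checks whether the adjacency list of $i$ is non-empty, which takes $O(\deg(i) + 1)$ time, or more pessimistically $O(|N| + |E|)$. No step of the argument is substantive enough to constitute a real obstacle; the only mild subtlety is the implicit convention that edges in $G$ have strictly positive weight, which justifies the strict inequality $w(\{i,j\}) > 0$ in the backward direction.
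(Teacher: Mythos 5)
Your proof is correct and takes the same (natural) route the paper leaves implicit: the observation is stated without an explicit proof, and the intended argument is exactly this combination of monotonicity of $v$, strictly positive Shapley coefficients, and one strictly positive marginal contribution for $S=\{j\}$. You also rightly flag the only real caveat --- the paper allows $w : E \rightarrow \mathbb{R}_{\geq 0}$, so the backward direction requires the implicit convention that edges have strictly positive weight (automatic in the unweighted case).
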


Next, we present another lemma concerning the Shapley value of unweighted matching games.
\begin{lemma}\label{lemma:basic}
Consider an unweighted matching game $(N,v)$. 
If for each $s \in [n-1]$, the number of coalitions of size $s$ for which player $i$ is pivotal in $(N,v)$ can be computed in time $f(n)$ for some function $f : \mathbb{N} \rightarrow \mathbb{R}_{\geq 0}$, then the Shapley value of $i$ can be computed in time $n f(n)$.
\end{lemma}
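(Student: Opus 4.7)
The plan is to start from the raw Shapley value formula~\eqref{eq:rawshapley1} and exploit the fact that in an unweighted matching game each marginal contribution $v(S \cup \{i\}) - v(S)$ is either $0$ or $1$, depending on whether $i$ is pivotal for $S$. Moreover, the coefficient $|S|!(|N|-|S|-1)!$ in the summation depends only on $|S|$, not on $S$ itself. This means we can group the terms in the sum by coalition size and reduce everything to counting pivotal coalitions of each size.

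Concretely, for each $s \in \{0,1,\dots,n-1\}$, let $c_s(i)$ denote the number of coalitions $S \subseteq N \setminus \{i\}$ with $|S|=s$ for which $i$ is pivotal. Then, by grouping terms in~\eqref{eq:rawshapley1} according to cardinality,
\begin{equation*}
\varphi_i(N,v) \;=\; \frac{1}{n!}\sum_{s=0}^{n-1} s!\,(n-s-1)!\, c_s(i).
\end{equation*}
All that remains is to evaluate this sum. By the hypothesis of the lemma, each coefficient $c_s(i)$ can be computed in time $f(n)$, and there are $n$ values of $s$ to iterate through, for a total cost of $n f(n)$ to determine all the $c_s(i)$. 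The remaining arithmetic (computing the $n$ factorials, performing $n$ multiplications, summing $n$ terms, and dividing by $n!$) contributes only polynomial overhead, which is absorbed into $n f(n)$ under the usual convention that $f(n)$ dominates the trivial polynomial cost (or, equivalently, by working in a unit-cost arithmetic model).

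There is essentially no obstacle here; the lemma is a bookkeeping statement whose only content is the observation that the Shapley value formula collapses into a size-indexed sum once the $\{0,1\}$-valued marginal contribution structure of unweighted matching games is exploited. The real work, which this lemma defers, is the design of an efficient counting procedure for pivotal coalitions of each given size --- and it is precisely this reduction from Shapley-value computation to a counting problem that will be leveraged in the subsequent sections of the paper.
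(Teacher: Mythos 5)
Your proof is correct and matches the paper's own argument: the appendix proof likewise defines $\eta_i^s$ as the number of size-$s$ pivotal coalitions, groups the terms of the Shapley sum by coalition size using the $\{0,1\}$-valued marginal contributions, and arrives at the identical formula $\varphi_i(v) = \frac{1}{n!}\sum_s s!(n-s-1)!\,\eta_i^s$. The only cosmetic difference is that you start the sum at $s=0$ while the paper starts at $s=1$, which is immaterial since no player is pivotal for the empty coalition.
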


\section{Exact algorithms for restricted graph classes}
Some classes of matching games for which computing the Shapley value is trivial are symmetric graphs (e.g. cliques and cycles), and graphs with a constant number of vertices.
We proceed to prove this for two additional special cases: graphs that admit constant size (co)clique modular decompositions, and graphs with degree at most two.

\vspace{-0.5em}

\subsection{Graphs with a constant number of clique or coclique modules}
An important concept in the context of undirected graphs is that of a \emph{module}. A subset of vertices $S \subseteq N$ is a module if all members of $S$ have the same set of neighbors in $N \setminus S$. We can extend this notion to weighted graphs by requiring that all members of $S$ are connected to the same set of neighbors, by edges of the same weight. A \emph{modular decomposition} is a partition of the vertex set into modules.

A \emph{clique module} (resp. \emph{coclique module}) of a weighted graph is a module of which the vertices are pairwise connected by edges of the same weight (resp. pairwise disconnected).
Note that every graph has a trivial modular decomposition into cliques (and cocliques): the partition of $N$ into singletons.

We prove that if an unweighted graph $G$ has a size $k$ modular decomposition consisting of only cliques or only cocliques, then the Shapley value of $\MG(G)$ can be found in polynomial time. In fact, we will show that this holds for the more general class of \emph{subgraph-based} games: We call a cooperative game $(N,v)$ \emph{subgraph-based} if there exists a weighted graph $G = (N,E,w)$ such that for $S, T \subset N$, it holds that $v(S) = v(T)$ if $G(S)$ and $G(T)$ are isomorphic.



\begin{theorem}\label{th:modules}
Consider a subgraph-based cooperative game $(N,v)$. Then, the Shapley value of $(N,v)$ can be computed in polynomial time if the following conditions hold: i.) the weighted graph $G = (N,E,w)$ associated to $(N,v)$ is given or can be computed from $(N,v)$ in polynomial time; 2.) there exists a modular decomposition $\gamma(G)$ into $k$ cocliques or $k$ cliques and $G$ is unweighted in the latter case; and iii.) $v(S)$ can be computed in polynomial time for all $S \subseteq N$. 
\end{theorem}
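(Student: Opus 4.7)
The plan is to exploit the symmetry afforded by the modular decomposition to collapse the exponential sum in the Shapley value formula into a polynomial-size sum, treating $k$ as a constant.

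Let $\gamma(G) = \{M_1,\ldots,M_k\}$ and set $m_j = |M_j|$. The first step is to observe that for any coalition $S \subseteq N$, the induced subgraph $G(S)$ is determined up to isomorphism by its \emph{profile} $(|S \cap M_1|, \ldots, |S \cap M_k|)$. This rests on two facts: (a) the subgraph induced on $S \cap M_j$ is either a clique of uniform edge weight or a coclique (depending on the type of module), and hence its isomorphism type depends only on $|S \cap M_j|$; (b) for any two distinct modules $M_i$ and $M_j$, either every pair in $M_i \times M_j$ is connected by edges of a single common weight, or no such edge is present. Fact (b) follows from a short chase through the module-defining property: fixing $u \in M_i$ and $v \in M_j$, the fact that $M_i$ is a module forces the edge $uv$ to have the same presence and weight as $u'v$ for every $u' \in M_i$, and then the module property of $M_j$ propagates this to $u'v'$ for all $v' \in M_j$. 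By the subgraph-based assumption, $v(S) = v(T)$ whenever $S$ and $T$ have the same profile.

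The second step is to group terms in the raw Shapley value formula by profile. Suppose $i \in M_\ell$. The number of coalitions $S \subseteq N \setminus \{i\}$ of a given profile $(s_1,\ldots,s_k)$ is $\binom{m_\ell - 1}{s_\ell} \prod_{j \neq \ell} \binom{m_j}{s_j}$, and each such coalition contributes an identical term to $\kappa_i$. The marginal contribution $v(S \cup \{i\}) - v(S)$ depends only on the profile and can be evaluated on a single representative in polynomial time by assumption (iii). Writing $s = \sum_{j} s_j$, we obtain
\[
\kappa_i(N,v) = \sum_{(s_1,\ldots,s_k)} \binom{m_\ell - 1}{s_\ell} \prod_{j \neq \ell} \binom{m_j}{s_j} \cdot s!\,(n-s-1)! \cdot (v(S \cup \{i\}) - v(S)),
\]
where the sum ranges over profiles with $0 \leq s_\ell \leq m_\ell - 1$ and $0 \leq s_j \leq m_j$ for $j \neq \ell$.

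The total number of profiles is bounded by $(n+1)^k$, which is polynomial for fixed $k$, so the whole expression can be evaluated in polynomial time; dividing by $n!$ yields $\varphi_i(N,v)$. The main obstacle, and the one point that needs care, is establishing fact (b) cleanly for the weighted (coclique) case; everything else is a bookkeeping argument once the profile parametrisation is in place. Finally, by anonymity (equivalently, by the observation that all members of a single module are symmetric in $(N,v)$), the Shapley value need only be computed for one representative of each module, so the overall running time remains polynomial even when the values for all players are required.
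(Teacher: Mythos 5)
Your proof is correct and rests on the same central observation as the paper's --- the modular structure collapses the coalition space to a constant number of player types --- but you finish the argument differently. The paper, after showing that all players within a module are pairwise symmetric, delegates the remaining work to the dynamic-programming result of Ueda et al.\ for cooperative games with a known constant-size partition into player types. You instead prove the slightly stronger statement that $v(S)$ depends only on the profile $(|S\cap M_1|,\dots,|S\cap M_k|)$ and then evaluate the raw Shapley value by an explicit closed-form sum over the at most $(n+1)^k$ profiles, with binomial coefficients counting the coalitions in each equivalence class. This is more self-contained and arguably more transparent; the paper's route is shorter and applies verbatim to any game with few player types even when no profile structure on subgraphs is available. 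Your verification of cross-module uniformity (your fact (b)) is exactly the point that needs checking in the weighted coclique case, and your neighbourhood-chasing argument for it is correct.

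One small omission: the theorem only asserts that a decomposition into $k$ (co)cliques \emph{exists}, whereas your algorithm takes $\gamma(G)$ as given. The paper's proof opens by showing how to compute a minimum-cardinality (co)clique modular decomposition in polynomial time (group vertices with identical weighted neighbourhoods outside the pair; for cliques, pass to the complement graph --- this is where the unweightedness hypothesis in the clique case is actually used). You should either add this construction step or state that the decomposition is supplied as part of the input. This is easily repaired and does not affect the substance of your argument.
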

\begin{proof}
Note first that one can find for $G$ in polynomial time a minimum cardinality modular decomposition into cocliques: simply check for each pair of vertices whether they are disconnected and connected to identical sets of vertices through edges with identical weights. If so, then they can be put in the same module. Similarly, a minimum cardinality modular decomposition into cliques can be found in polynomial time in case the graph is unweighted, by finding a minimum cardinality modular decomposition into cocliques in the complement of $G$ (i.e., the graph that contains only those edges not in $E$). 

A set of players $S$ is said to be of the same \emph{player type} if all players in $S$ are pairwise symmetric. We first show that all players in the same module of $\gamma(G)$ are of the same player type. Let $i,j$ be two players in the same module $M$ in $\gamma(G)$. Then, for every coalition $C \in N \backslash \{i,j\}$, the subgraphs $G(C \cup \{i\})$ and $G(C \cup \{j\})$ are isomorphic (because $G(M)$ is a clique or coclique), so $v(C \cup \{i\}) = v(C \cup \{j\})$. Therefore, we know that the vertices can be divided into a constant number $k$ of player types.
	
\citet{UKI+11a} showed that any cooperative game in which the value of a given coalition can be computed in polynomial time, and there is known size $k$ partition of the players into sets of the same player type, then the Shapley value can be computed in polynomial time via dynamic programming. The number of player types in our game is constant number $k$ of clique and coclique modules, and therefore the result of \cite{UKI+11a} can be applied, and proves our claim.
\end{proof}

For matching games, the function $v$ can be evaluated using any polynomial time maximum weight matching algorithm. Therefore, the above result implies that computing the Shapley value can be done in polynomial time for classes of graphs where we can find efficiently a size $k$ modular decomposition into cliques or cocliques. This includes the class of complete $k$-partite graphs and any strong product\footnote{The \emph{strong product} of two graphs $G_1 = (N, E_1)$ and $G_2 = (M, E_2)$ is defined as the graph $(N \times M, E')$, where $E' = \{\{(i_N,i_M), (j_N, j_M)\} \subseteq N \times M : i_M = j_M \wedge \{i_N,j_N\} \in E_1 \vee \{i_M, j_M\} \in E_2\}$.} of an arbitrary size clique (or coclique) with a graph on $k$ vertices. 

\begin{corollary}\label{th:kpartite}
For matching games based on complete $k$-partite graphs, where $k$ is a constant, the Shapley value can be computed in polynomial time. 
\end{corollary}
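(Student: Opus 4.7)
The plan is to derive the corollary as a direct application of Theorem~\ref{th:modules}. To do so, I need to verify each of the three hypotheses of that theorem in turn for the matching game on a complete $k$-partite graph $G = (N, E)$.

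First, I would argue that $G$ is itself the weighted graph associated to the matching game, and it is given as input, so hypothesis (i) holds trivially. Moreover, $G$ is unweighted (all edge weights equal one), which matches the assumption required for the clique/coclique dichotomy in hypothesis (ii).

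Second, for hypothesis (ii), I would exhibit the natural size-$k$ modular decomposition. Let $N_1,\ldots,N_k$ be the color classes of $G$. By definition of a complete $k$-partite graph, no two vertices within a class $N_j$ are adjacent, and every vertex in $N_j$ is adjacent to every vertex outside $N_j$. Hence each $N_j$ is a coclique, and any two vertices in $N_j$ share exactly the same external neighborhood $N \setminus N_j$; this is precisely the definition of a coclique module. Therefore $\{N_1,\ldots,N_k\}$ is a modular decomposition of $G$ into $k$ cocliques, as required.

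Third, for hypothesis (iii), I would observe that for any $S \subseteq N$, the value $v(S)$ is the size of a maximum matching in $G(S)$, which itself is a (smaller) complete $k$-partite graph. Maximum matching can be computed in polynomial time by any standard matching algorithm (e.g.\ Edmonds' blossom algorithm), so $v$ is polynomial-time evaluable. With all three hypotheses verified, Theorem~\ref{th:modules} applies and yields the polynomial-time computability of the Shapley value. Since the verification of each hypothesis is essentially immediate from the definitions, I expect no obstacle beyond making sure the coclique-module structure is correctly identified; the nontrivial algorithmic work is entirely outsourced to Theorem~\ref{th:modules}.
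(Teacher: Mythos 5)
Your proposal is correct and follows exactly the paper's route: the corollary is obtained by applying Theorem~\ref{th:modules}, observing that the $k$ color classes of a complete $k$-partite graph form a modular decomposition into $k$ cocliques and that $v$ is evaluable in polynomial time via any maximum matching algorithm. The only (trivial) point left implicit is that matching games are subgraph-based, which is needed to invoke the theorem at all.
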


Theorem~\ref{th:modules} also applies to cooperative games such as $s$-$t$ vertex connectivity games and min-cost spanning tree games~\citep{DeFa08a,DIN99a}, as these are subgraph-based games.

\subsection{Graphs of degree at most two}
We first examine \emph{linear graphs} (or: ``paths''), i.e., connected graphs in which two vertices have out-degree one and the remaining vertices have out-degree two.

\begin{lemma}\label{lemma:line}
The Shapley value of a player in a matching game on an unweighted linear graph can be computed in $O(n^4)$ time.
\end{lemma}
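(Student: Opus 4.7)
My plan is to apply Lemma~\ref{lemma:basic}: it suffices to show that, for each size $s \in \{0, \ldots, n-1\}$, the number of size-$s$ coalitions $S \subseteq N \setminus \{i\}$ for which $i$ is pivotal can be computed in $O(n^3)$ time. Label the vertices of the path $1, 2, \ldots, n$ in path order, fix the target vertex $i$, and set $L = i-1$, $R = n-i$. For a coalition $S$, define $a(S)$ as the largest integer $a \geq 0$ with $\{i-a, \ldots, i-1\} \subseteq S$, and define $b(S)$ analogously on the right side of $i$. Since $i \notin S$, the connected component of $i$ in $G(S \cup \{i\})$ is a path on $a(S) + b(S) + 1$ vertices, and removing $i$ splits this component into two paths on $a(S)$ and $b(S)$ vertices respectively, while leaving the rest of $G(S)$ unchanged.

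Using the fact that the maximum matching on a path with $k$ vertices has size $\lfloor k/2 \rfloor$, the marginal contribution of $i$ to $S$ then equals $\lfloor (a+b+1)/2 \rfloor - \lfloor a/2 \rfloor - \lfloor b/2 \rfloor$. A four-case parity analysis shows that this quantity is $0$ when both $a$ and $b$ are even, and $1$ in every other case; hence $i$ is pivotal for $S$ iff at least one of $a(S), b(S)$ is odd. For any fixed pair $(a, b)$ with $0 \leq a \leq L$ and $0 \leq b \leq R$, a coalition $S$ realizes these values exactly iff the $a$ vertices $\{i-a, \ldots, i-1\}$ belong to $S$, the boundary vertex $i-a-1$ is excluded when $a < L$, and the remaining $L - a - 1$ ``left-free'' vertices $\{1, \ldots, i-a-2\}$ are chosen arbitrarily (the special case $a = L$ forces all left vertices into $S$ and has no boundary vertex). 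The right side is handled symmetrically.

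To count size-$s$ coalitions of a fixed type $(a, b)$, I will precompute binomial coefficients in $O(n^2)$ time and then express the count as a product-of-binomials sum over all splits $k_L + k_R = s$, which takes $O(n)$ time per type. Summing over the $O(n^2)$ pairs $(a, b)$ with $a$ odd or $b$ odd then yields the number of pivotal coalitions of size $s$ in $O(n^3)$ time, and the claimed $O(n^4)$ bound follows from Lemma~\ref{lemma:basic}. The only somewhat delicate point is the bookkeeping at the boundaries $a = L$ and $b = R$, where no boundary vertex is excluded and the binomial formula degenerates; this is handled by a minor piecewise definition and poses no real obstacle. The conceptual heart of the argument is the observation that on a path, whether $i$ is pivotal depends only on the parities of the two runs of vertices of $S$ immediately adjacent to $i$, which localizes the combinatorial problem to two essentially independent one-dimensional counts.
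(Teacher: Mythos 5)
Your proposal is correct and follows essentially the same route as the paper's proof: both reduce to counting, for each size $s$, the pivotal coalitions via Lemma~\ref{lemma:basic}, localize the question to the two maximal runs of $S$ adjacent to $i$, and use the parity observation that $i$ is pivotal iff not both runs have an even number of vertices. Your single parametrization by the pair $(a,b)$ merely unifies the paper's four cases (left/right/connect/isolated, i.e., $a=0$ and/or $b=0$), and your per-type count could even be a single binomial coefficient rather than a Vandermonde-style split, but these are cosmetic differences.
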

\begin{proof} 
Assume without loss of generality that the vertex set is $[n]$ and the edge set is $\{\{j, j+1\} : j \in [n-1]\}$, and that $i \in [n]$ is the player of whom we want to compute the Shapley value. Fix any $s \in [n-1]$, and let $\eta_i^s$ be the number of coalitions of size $s$ for which vertex $i$ is pivotal.
We compute $\eta_i^s$ by subdividing in separate cases and taking the sum of them:
\begin{itemize}
 \item The number $\eta_i^{s, \text{left}} = |\{S \cup \{i + 1\} : S \in N \backslash \{i,i-1,i+1\}, i \text{ is pivotal for } S \}|$. Intuitively: the number of coalitions $S$ where $i$ is pivotal such that adding $i$ to $S$ extends the left of a line segment.
 \item The number $\eta_i^{s, \text{right}} = |\{S \cup \{i - 1\} : S \in N \backslash \{i,i-1,i+1\}, i \text{ is pivotal for } S \}|$.
 \item The number $\eta_i^{s,\text{connect}} = |\{S \cup \{i - 1, i + 1\} : S \in N \backslash \{i,i-1,i+1\}, i \text{ is pivotal for } S \}|$. Intuitively: the number of coalitions $S$ where $i$ is pivotal, such that $i$ connects two line segments.
 \item $\eta_i^{s,\text{isolated}} = |\{S \backslash \{i - 1, i + 1\} : S \in N \backslash \{i,i-1,i+1\}, i \text{ is pivotal for } S \}|$.
\end{itemize}

It is immediate that $\eta_i^{s,\text{isolated}} = 0$, since adding $i$ to a coalition $S$ not containing $i+1$ nor $i-1$ results in a coalition forming a subgraph in which $i$ is an isolated vertex. 
For the remaining three values, $\eta_i^{s,\text{left}}$, $\eta_i^{s,\text{right}}$, and $\eta_i^{s,\text{connect}}$, we show below how to compute them efficiently.
\begin{itemize}
 \item For $\eta_i^{s,\text{left}}$, observe that adding a vertex to the left of a (non-empty) line segment $L$ increases the cardinality of a maximum matching if and only if $L$ has an even number of edges (and thus an odd number of vertices). Therefore, define $\eta_i^{s,\text{left}}(k)$ to be the number of coalitions $S$ of size $s$ for which $i$ is pivotal such that $S$ contains the line segment $\{i+1, \ldots, i+k+1\}$, and does not contain $\{i-1, i+k+2\}$. The number $\eta_i^{s,\text{left}}(k)$ is easy to determine:
\begin{equation*}
\eta_i^{s,\text{left}}(k) = \begin{cases} 0 & \text  { if } k \text{ is odd, } \\ \binom{|[n] \backslash \{i - 1, \ldots, i+k+2\}|}{s - |\{i - 1, \ldots, i+k+1\} \cap [n]|} & \text{ otherwise. } \end{cases}
\end{equation*}
We can then express $\eta_i^{s,\text{left}}$ as $\sum_{k=1}^{\max\{n-i-1, s-1\}} \eta_i^{s,\text{left}}(k)$. There is only a linear number of terms in this sum, and all of them can be computed in linear time. 
 \item $\eta_i^{s,\text{right}}$ is computed in an analogous fashion.
 \item For $\eta_i^{s,\text{connect}}$, observe that adding a vertex $i$ to a coalition such that $i$ connects two line segments $L_1$ and $L_2$, increases the cardinality of a maximum matching if and only if $L_1$ and $L_2$ do not both have an odd number of edges (or equivalently: not both have an even number of vertices). Therefore, define $\eta_i^{s,\text{connect}}(k_1, k_2)$ to be the number of coalitions $S$ of size $s$ for which $i$ is pivotal such that $S$ contains the line segments $\{i-k_1-1, \ldots, i-1\}$ and $\{i+1, \ldots, i+k_2+1\}$, and does not contain $\{i-k_1-2, i+k_2+2\}$. The number $\eta_i^{s,\text{connect}}(k_1, k_2)$ is easy to determine:
\begin{equation*}
\eta_i^{s,\text{connect}}(k_1, k_2) = \begin{cases} 0 & \text{ if } k_1 \text{ and } k_2 \text{ are both odd, } \\ \binom{|[n] \backslash (\{i - k - 2, \ldots, i+k+2\}|}{s - |\{i - k - 1, \ldots, i+k+1\} \cap [n]|} & \text{ otherwise. } \end{cases}
\end{equation*}
We can then express $\eta_i^{s,\text{connect}}$ as 
$\sum_{k_1 = 1}^{\max\{i-2, s-1\}} \sum_{k_2 = 1}^{\max\{n-i-1, s-k_1-2\}} \eta_i^{s,\text{left}}(k_1,k_2).$ The number of terms in this sum is quadratic, and all of these terms can be computed in linear time. We can thus compute $\eta_i^{s,\text{connect}}$ in $O(n^3)$ time.
\end{itemize}
The claim now follows from Lemma \ref{lemma:basic}.
\end{proof}

\begin{theorem}
For graphs with maximum degree $2$, the Shapley value can be computed in polynomial time.
\end{theorem}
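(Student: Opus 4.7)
The plan is to reduce the problem to the case of individual connected components, each of which has a very constrained structure. A graph $G$ with maximum degree at most two decomposes as a disjoint union of isolated vertices, paths (linear graphs), and cycles. By Lemma~\ref{lemma:components}, the Shapley value of the specified vertex $i$ in $\MG(G)$ equals the Shapley value of $i$ in the matching game on the connected component $C$ containing $i$, so it suffices to handle each of these three component types individually. Isolated vertices contribute Shapley value $0$ by Observation~\ref{zeroshapley}, and the case in which $C$ is a path is already handled in polynomial time (in fact $O(n^4)$) by Lemma~\ref{lemma:line}. The only remaining case is that $C$ is a cycle.

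For a cycle $C_m$ on $m \geq 3$ vertices, I would exploit vertex-transitivity. Every rotation of the cycle is a graph automorphism, so for any two vertices $u,v$ of $C_m$ there exists an automorphism $\pi$ of $C_m$ mapping $u$ to $v$. Since $\pi$ relabels the players while preserving the characteristic function of $\MG(C_m)$ (isomorphic subgraphs have equal matching numbers), the anonymity property of the Shapley value forces $\varphi_u(\MG(C_m)) = \varphi_v(\MG(C_m))$. Hence all vertices of $C_m$ have a common Shapley value, and by efficiency this value must equal
\[
\varphi_i(\MG(C_m)) \;=\; \frac{v(C_m)}{m} \;=\; \frac{\lfloor m/2 \rfloor}{m},
\]
which can be computed in constant time once the size of the component containing $i$ is identified.

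The main ``obstacle'' of the theorem is thus entirely absorbed into Lemma~\ref{lemma:line}: the path case is where all the delicate counting lives, while the cycle case collapses to a one-line symmetry calculation instead of requiring an analogous case analysis. Putting the three cases together yields an algorithm whose overall running time is dominated by the $O(n^4)$ bound from the path case, which is polynomial as required.
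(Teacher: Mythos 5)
Your proposal is correct and follows essentially the same route as the paper: decompose into connected components via Lemma~\ref{lemma:components}, invoke Lemma~\ref{lemma:line} for paths, and dispose of cycles by symmetry/anonymity plus efficiency. The explicit value $\lfloor m/2\rfloor/m$ for the cycle case is a nice concrete addition, but the argument is the same one the paper gives.
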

\begin{proof}
A graph with degree at most two is a disjoint union of cycles and linear graphs. From Lemma~\ref{lemma:components}, we can compute the Shapley value of the connected components separately. From Lemma~\ref{lemma:line}, we know that the Shapley value of linear graphs can be computed in polynomial time. Due to anonymity, the Shapley value of a cycle is uniform.
\end{proof}

The above proof for linear graphs demonstrates nicely that computation of the Shapley value of a matching game already becomes intricate for even the simplest of graph structures. We would be interested in seeing an extension of this result that enables us to exactly compute the Shapley value in \emph{any} non-trivial class of graphs that contains a vertex of degree at least three.

\section{Computational complexity of the general problem}
In this section, we examine the computational complexity of the general problem of computing the Shapley value for matching games. 
As we mentioned in Section \ref{sec:preliminaries}, {\sc Shapley} is equivalent to the problem of counting the number of pivotal permutations for a player in an unweighted matching game, and is therefore a counting problem. It is moreover easy to see that this counting problem is a member of the complexity class $\mathsf{\#P}$.\footnote{Informally: $\mathsf{\#P}$ is the class of computational problems that correspond to counting the number of accepting paths on a non-deterministic Turing machine. We refer the reader to any introductory text on complexity theory.}

For certain cooperative games such as weighted voting games~\citep{EGGW09a}, 
intractability of computing the Shapley value can be established by proving that even checking whether a player gets non-zero Shapley value is $\mathsf{NP}$-complete. Proposition \ref{zeroshapley} tells us that this is not the case for matching games.
Before we proceed, we establish some notation. Let $G=(N,E)$ be a graph. Let $\alpha_k(G)$ be the number of vertex sets $S \subseteq N$ such that $|S|=k$ and the subgraph $G(S)$ of $G$ induced by $S$ admits a perfect matching. Then $\overline{\alpha_k}(G)={n\choose k}-\alpha_k(G)$ is the number of subsets $S \subseteq N$ of size $k$ such that $G(S)$ does not admit a perfect matching. In order to characterize the complexity of {\sc Shapley}, we first define the following problem. \\
\\
%
\noindent
{\sc \#MatchableSubgraphs$_k$}  \\
\noindent
Instance: Undirected and unweighted graph $G=(N,E)$ and an even integer $k$. \\
\noindent
Question: Compute $\alpha_k(G)$.

%

\begin{lemma}\label{lemma:countinghard}
{\sc \#MatchableSubgraphs$_k$} is $\mathsf{\#P}$-complete. 
\end{lemma}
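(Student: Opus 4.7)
The plan is to prove \#P-completeness in two stages: membership in \#P (routine) and \#P-hardness by a polynomial-time Turing reduction from a known \#P-complete matching-counting problem, using an interpolation argument whose key ingredient is the non-singularity of a Pascal-type matrix.

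Membership in \#P is immediate. A nondeterministic polynomial-time machine guesses a subset $S \subseteq N$ with $|S| = k$ and accepts iff $G(S)$ admits a perfect matching; this latter test can be performed in polynomial time by Edmonds' blossom algorithm. The number of accepting computation paths is exactly $\alpha_k(G)$, so \#\textsc{MatchableSubgraphs}$_k \in \mathsf{\#P}$.

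For hardness, my plan is to reduce from \textsc{\#PerfectMatching} in bipartite graphs, which is \#P-complete by Valiant's theorem. Given an instance $H$ with $2n$ vertices, I would build a polynomial family $\{G_t\}_{t=0}^T$ of auxiliary graphs obtained by attaching a controlled gadget to $H$ (for example, taking $G_t$ to be the disjoint union of $H$ with a clique $K_{2t}$, or with $t$ structured pendant edges). Partitioning a matchable subset of $G_t$ according to how it splits between $H$ and the gadget, one obtains an identity of the form
\begin{equation*}
\alpha_{k(t)}(G_t) \;=\; \sum_{j=0}^{T} c_{t,j}\, x_j,
\end{equation*}
where the unknowns $x_j$ include the quantity $PM(H)$ together with other values that are either polynomial-time computable or eliminable, and where the coefficients $c_{t,j}$ are binomial expressions such as $\binom{2t}{2j}$. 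The point is that such a coefficient matrix is lower triangular with ones on the diagonal and therefore invertible, so solving the linear system recovers $PM(H)$ from the $T+1$ oracle answers. This is exactly the Pascal-triangle-determinant technique announced by the authors in the paper's introduction for the main theorem, reused in the simpler setting of the lemma.

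The principal obstacle will be the gadget design: the gadget must be rich enough that the decomposition of $\alpha_{k(t)}(G_t)$ isolates the quantity $PM(H)$ (and does not merely reduce to other $\alpha_j(H)$ values, which would give a tautological reduction inside the same problem). The natural tool for controlling which subsets become matchable is Berge's lemma, applied to the potential augmenting paths that pass between $H$ and the gadget vertices; this forces matchability of the combined subset to be equivalent to the existence of a perfect matching in an appropriate sub-instance of $H$, thereby linking $\alpha_{k(t)}(G_t)$ back to $PM(H)$. Once the gadget is chosen so that the coefficient matrix has the Pascal-type structure above, its non-zero determinant follows by the same triangularity argument used in the main theorem, and the reduction is polynomial-time since both the number of graphs $G_t$ queried and each $|G_t|$ are polynomial in $|H|$.
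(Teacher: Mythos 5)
Your membership argument is fine, but the hardness half of the proposal has a genuine gap, and it is not one that more care with the gadget would fix. The quantity $\alpha_{k}(\cdot)$ counts vertex subsets that admit a perfect matching; whether a given subset is matchable is a $0/1$ property of that subset, so when you decompose $\alpha_{k(t)}(G_t)$ according to how a subset splits between $H$ and an attached gadget, the resulting identity can only ever involve other values of the form $\alpha_j(H)$ (or $\overline{\alpha_j}(H)$), weighted by counts of matchable gadget-parts. For instance, with $G_t = H \sqcup K_{2t}$ one gets exactly $\alpha_k(G_t) = \sum_j \binom{2t}{2j}\alpha_{k-2j}(H)$. The number of perfect matchings $PM(H)$ simply never enters such a system as an unknown: the closest relative that appears is $\alpha_{|V(H)|}(H) \in \{0,1\}$, which records only \emph{whether} $H$ is perfectly matchable. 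So the linear system you propose is invertible but tautological --- it recovers $\alpha_j(H)$ from $\alpha_j(G_t)$, which is a reduction of the problem to itself --- and you have correctly identified this as ``the principal obstacle'' without overcoming it. The Pascal-matrix interpolation is the right tool for the paper's \emph{main} theorem, where the oracle returns Shapley values and the unknowns genuinely are the $\alpha_k(G)$; it does not by itself yield hardness of {\sc \#MatchableSubgraphs$_k$}.

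The paper's actual proof of this lemma is a one-step citation: \citet{CPV95a} proved that, for a bipartite graph $G = (S \cup I, E)$, counting the subsets $B \subseteq S$ such that $G(B \cup I)$ admits a perfect matching is $\mathsf{\#P}$-complete, and that problem coincides with {\sc \#MatchableSubgraphs$_{2|I|}$} (any perfectly matchable induced subgraph on $2|I|$ vertices of a bipartite graph with sides $S$ and $I$ must take all of $I$ and exactly $|I|$ vertices of $S$). The paper even remarks that this counting result of Colbourn et al.\ resolved ``an exceptionally difficult problem'' and that the corresponding decision problem appears to be open, which is a strong hint that no short first-principles gadget reduction of the kind you sketch is known. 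If you want a self-contained hardness proof you would need to reproduce the Colbourn--Provan--Vertigan argument (which goes through the Tutte polynomial of transversal matroids), not an interpolation over gadgets attached to $H$.
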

\begin{proof}
\citet{CPV95a} proved that the following problem is $\mathsf{\#P}$-complete: Given an undirected and unweighted bipartite graph $G = (S\cup I,E)$, compute the number of subsets of $B \subseteq  S$, such that $G(B \cup I)$ admits a perfect matching.\footnote{The proof of Colbourn resolved ``an exceptionally difficult problem''~\citep{CPV95a}. Interestingly, the corresponding decision problem of checking whether there exists a subgraph of size $k$ that does not admit a perfect matching, appears to be open.} 
The problem is equivalent to
\#MatchableSubgraphs$_{2|I|}$.
\end{proof}
%
\begin{theorem}
Computing the Shapley value of a matching game on an unweighted graph is $\mathsf{\#P}$-complete. 
\end{theorem}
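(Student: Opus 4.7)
The plan is to give a polynomial-time Turing reduction from {\sc \#MatchableSubgraphs$_k$} to {\sc Shapley}. Since {\sc Shapley} has already been observed to lie in $\mathsf{\#P}$, and {\sc \#MatchableSubgraphs$_k$} is $\mathsf{\#P}$-hard by Lemma~\ref{lemma:countinghard}, exhibiting such a reduction will yield $\mathsf{\#P}$-completeness.

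Given an instance $(G=(N,E),k)$ with $|N|=n$, set $m=\lfloor n/2\rfloor$. For each $j\in\{0,1,\ldots,m\}$ I would construct an auxiliary unweighted graph $G^{(j)}$ obtained from $G$ by adjoining $j$ fresh isolated vertices $a_{1},\ldots,a_{j}$ together with a single extra vertex $i^{*}$ that is adjacent to every vertex of $N\cup\{a_{1},\ldots,a_{j}\}$. The reduction would query the {\sc Shapley} oracle on $(G^{(j)},i^{*})$ for each $j$ and assemble the answers into a linear system whose solution recovers $\alpha_{k}(G)$.

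The first technical step is to characterize the pivotal coalitions for $i^{*}$ in $\MG(G^{(j)})$. Because $i^{*}$ is joined to every vertex of any coalition $S\subseteq N\cup\{a_{1},\ldots,a_{j}\}$, Berge's Lemma implies that $i^{*}$ is pivotal for $S$ iff some vertex of $S$ is exposed by a maximum matching of $G^{(j)}(S)$, in which case the edge from that vertex to $i^{*}$ forms an augmenting path. Since $G^{(j)}(S)$ is $G(S\cap N)$ together with the auxiliary vertices in $S$ as isolates, this condition fails precisely when $S\subseteq N$ and $G(S)$ admits a perfect matching. Hence the number of size-$s$ pivotal coalitions equals $\binom{n+j}{s}$ when $s$ is odd, and $\binom{n+j}{s}-\alpha_{s}(G)$ when $s$ is even (with $\alpha_{s}(G)=0$ for $s>n$). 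Substituting into the raw-Shapley formula and using $s!(n+j-s)!\binom{n+j}{s}=(n+j)!$ yields
\begin{equation*}
(n+j+1)! - \kappa_{i^{*}}(\MG(G^{(j)})) \;=\; \sum_{t=0}^{m}(2t)!\,(n+j-2t)!\,\alpha_{2t}(G),
\end{equation*}
so as $j$ varies over $\{0,1,\ldots,m\}$ one obtains an $(m+1)\times(m+1)$ linear system $y=Mx$ in the unknowns $x_{t}=\alpha_{2t}(G)$, with coefficient matrix $M_{j,t}=(2t)!(n+j-2t)!$ and right-hand sides $y_{j}$ computable from the oracle answers (note $\kappa_{i^{*}} = (n+j+1)!\,\varphi_{i^{*}}$).

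The hard part will be establishing that $M$ is nonsingular; this is the ``Pascal triangle'' ingredient alluded to in the introduction. I would factor $M=D_{1}BD_{2}$, where $D_{1}$ and $D_{2}$ are invertible diagonal matrices with entries $j!$ and $(2t)!(n-2t)!$ respectively, and $B_{j,t}=\binom{n+j-2t}{j}$. Expanding $B_{j,t}=(n+j-2t)(n+j-2t-1)\cdots(n+1-2t)/j!$ shows that for each fixed $j$ this entry is a polynomial in $t$ of degree exactly $j$, with nonzero leading coefficient $(-2)^{j}/j!$. The rows of $B$ are thus polynomials of pairwise distinct degrees $0,1,\ldots,m$ evaluated at the distinct integers $t=0,1,\ldots,m$; a lower-triangular change of basis converts $B$ into a Vandermonde matrix on these nodes, which is invertible. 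Hence $M$ is nonsingular, the linear system is solvable in polynomial time (all entries have polynomial bit length), and in particular $\alpha_{k}(G)$ is recovered, completing the reduction.
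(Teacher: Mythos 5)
Your reduction is correct, but it is a genuinely different construction from the paper's. The paper's gadget attaches to $G$ a universal vertex $y_0$ together with a pendant path $y_1,\dots,y_i$ and queries the Shapley value of the path's far endpoint $y_i$; this forces a case split on the parity of $i$, a separate characterization lemma for the ``type 2'' pivotal coalitions (proved via alternating/augmenting paths in the attached path), an extra family of ``type 1'' coalitions whose contribution must be collected into a correction term $C(i)$, and finally an $(n+1)\times(n+1)$ system whose matrix is rescaled to the Pascal-type matrix $B_{ij}=(i+j)!$, whose nonzero determinant $\prod_i (i!)^2$ is cited from the literature. Your gadget instead pads $G$ with $j$ isolated vertices and queries the universal vertex $i^*$ itself, which collapses the pivotality analysis to a one-line consequence of Berge's lemma ($i^*$ is pivotal for $S$ iff $G^{(j)}(S)$ has no perfect matching, since otherwise a single edge from $i^*$ to an exposed vertex is an augmenting path, and a perfectly matched $S$ of even size cannot be improved by one extra vertex); this eliminates the parity cases, the type 1/type 2 decomposition, and the correction term entirely, and shrinks the system to size $\lfloor n/2\rfloor+1$ using $\alpha_s(G)=0$ for odd $s$. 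Your nonsingularity argument is also self-contained where the paper's is a citation: the factorization $M=D_1BD_2$ with $B_{j,t}=\binom{n+j-2t}{j}$ a polynomial in $t$ of exact degree $j$ (leading coefficient $(-2)^j/j!$, and the binomial coefficient agrees with the falling-factorial polynomial since $n+j-2t\ge j\ge 0$ for $t\le\lfloor n/2\rfloor$) reduces invertibility to that of a Vandermonde matrix on the distinct nodes $0,1,\dots,m$. Both proofs establish membership in $\mathsf{\#P}$ identically and both rest on Lemma~\ref{lemma:countinghard}; what your version buys is a substantially shorter and more elementary hardness argument, while the paper's path gadget is what generalizes to its explicit formulas (\ref{eq:reductioneven})--(\ref{eq:reductionodd}).
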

\begin{proof}
We present a polynomial-time Turing reduction from {\sc \#MatchableSubgraphs$_k$} to {\sc Shapley}.

Let $G_0$ be the graph in which a new vertex $y_0$ is added to $G = (N,E)$ that is connected to all vertices in $N$. For $i>0$, let $G_i$ be $G_0$ with $i$ additional vertices $y_1,y_2, \ldots, y_i$ and $i$ additional edges $\{\{y_j, y_{j-1}\} : j \in [i]\}$. 

The first part of the proof consists of showing that the following set of equations hold:
\begin{numcases}{\kappa_{y_i}(\MG(G_i)) = }
\textstyle{C(i) + \sum_{k=0}^n (k+i)!(n-k)! \overline{\alpha_k}(G)} & \text{ if } i \text{ is even,} \label{eq:reductioneven}\\
\textstyle{C(i) + \sum_{k=0}^n (k+i)!(n-k)! \alpha_k(G)} & \text{ if } i \text{ is odd,} \label{eq:reductionodd}
\end{numcases}
where 
\begin{equation*}C(i) = \sum_{k = 1}^{\lfloor i/2 \rfloor} \sum_{j = 0}^{n+i-2k} (j+2k-1)! (n+i-j-2k+1)! \binom{n+i-2k}{j}.
\end{equation*}

Define a \emph{type 1 pivotal coalition for $y_i$ in $\MG(G_i)$} as a pivotal coalition for $i$ in $\MG(G_i)$ that \emph{does not} contain all players $y_0, \ldots, y_{i-1}$.
Define a \emph{type 2 pivotal coalition for $y_i$ in $\MG(G_i)$} as a pivotal coalition for $y_i$ in $\MG(G_i)$ that \emph{does} contain all players $y_0, \ldots, y_{i-1}$. Denote by $H_i^{\text{type 1}}(s)$ (resp. $H_i^{\text{type 2}}(s)$) the set of type 1 (resp. type 2) pivotal coalitions for $i$ in $\MG(G_i)$ that are of size $s$. From (\ref{eq:rawshapley1}), it follows that
\begin{equation}\label{eq:rawshapleytypes}
\kappa_i(\MG(G_i)) = \sum_{s = 1}^{n+i} s!(n+i-s)!|H_i^{\text{type 1}}(s)| + \sum_{s = 1}^{n+i} s!(n+i-s)!|H_i^{\text{type 2}}(s)|.
\end{equation}

First we characterize the coalitions in $H_i^{\text{type 2}}(s)$.
\begin{lemma}\label{lem:char}
If $i$ is even, a coalition $S$ of $\MG(G_i)$ is in $H_i^{\text{type 2}}(s)$ if and only if $G(S \cap N)$ is not perfectly matchable (and $\{y_0, \ldots, y_{i-1}\} \subseteq S, |S| = s$).
If $i$ is odd, a coalition $S$ of $\MG(G_i)$ is in $H_i^{\text{type 2}}(s)$ if and only if $G(S \cap N)$ is perfectly matchable (and $\{y_0, \ldots, y_{i-1}\} \subseteq S, |S| = s$).
\end{lemma}
%
%
The proof of Lemma \ref{lem:char} is deferred to the appendix.
From the above lemma, it follows that the coalitions in $H_i^{\text{type 2}}(s)$ are precisely the coalitions of the form $T \cup \{y_0, \ldots, y_{i-1}\}$, where $T \subset N$ is such that for even $i$, $G(T)$ is not perfectly matchable, and for odd $i$, $G(T)$ is perfectly matchable. Therefore $|H_i^{\text{type 2}}(s)| = \overline{\alpha_{s - i}}(G)$ for even $i$ and $|H_i^{\text{type 2}}(s)| = \alpha_{s - i}(G)$ for odd $i$, and this implies:
\begin{equation*}
\sum_{s = 1}^{n+i} s!(n+i-s)!|H_i^{\text{type 2}}(s)| = 
\begin{cases}
\sum_{k=0}^n (k+i)!(n-k)! \overline{\alpha_k}(G) & \text{ if } i \text{ is even, } \\
\sum_{k=0}^n (k+i)!(n-k)! \alpha_k(G) & \text{ if } i \text{ is odd. } \\
\end{cases}
\end{equation*}
In words: the second summation of (\ref{eq:rawshapleytypes}) equals the summation of (\ref{eq:reductioneven}) when $i$ is even, and the summation of (\ref{eq:reductionodd}) when $i$ is odd.
Therefore, it suffices to prove that the first summation of (\ref{eq:rawshapleytypes}) equals $C(i)$. 

For this sake, define $H_i^{\text{type 1}}(s,k)$ for $k \in [\lfloor i/2 \rfloor]$ as $\{S \in H_i^{\text{type 1}}(s) : y_{i-2k} \not \in S \wedge \{y_{i-1}, \ldots, y_{i - 2k + 1}\} \subseteq S\}$. Observe that $\{H_i^{\text{type 1}}(s,1), \ldots, H_i^{\text{type 1}}(s,k/2)\}$ is a partition of $H_i^{\text{type 1}}(s)$.
For a given $k$ and $s$, note that the set $H_i^{\text{type 1}}(s,k)$ consists of all coalitions of the form $T \cup \{y_{i-1}, \ldots, y_{i - 2k + 1}\}$, where $T \subseteq N \cup \{y_0, \ldots, y_{i - 2k - 1}\}$, $|T| = s - 2k + 1$. Hence, $|H_i^{\text{type 1}}(s,k)| = \binom{n+i-2k}{s - 2k + 1}$ (defining $\binom{a}{b} = 0$ whenever $b<0$ or $b>a$). Therefore:
\begin{eqnarray*}
\sum_{s = 1}^{n+i} s!(n+i-s)!|H_i^{\text{type 1}}(s)| & = & \sum_{k = 1}^{\lfloor i/2 \rfloor} \sum_{s=2k-1}^{n+i-1} s!(n+i-s)!\binom{n+i-2k}{s-2k+1} \\
& = & \sum_{k = 1}^{\lfloor i/2 \rfloor} \sum_{j=0}^{n+i-2k} (j+2k-1)!(n+i-j-2k+1)!\binom{n+i-2k}{j}.
\end{eqnarray*}
This shows that (\ref{eq:reductioneven}) and (\ref{eq:reductionodd}) hold.

The second part of the proof consists of showing that all $\alpha_k(G), k \in [n]$ can be computed from $\kappa_{y_i}(MG(G_i))$ in polynomial time, using (\ref{eq:reductioneven}) and (\ref{eq:reductionodd}), for $i \in [n] \cup \{0\}$. This is sufficient to complete the proof, because the graphs $G_0, \ldots, G_n$ can clearly be constructed from $G$ in polynomial time, hence a polynomial time algorithm that computes $\alpha_k$ from $\kappa_{y_i}(MG(G_i)), i \in [n]$ is a polynomial Turing reduction.

Let $\beta_i(G) = \alpha_i(G)$ for even $i$ and let $\beta_i(G) = \overline{\alpha_i}(G)$ for odd $i$.
We can represent (\ref{eq:reductioneven}) and (\ref{eq:reductionodd}) for $i \in [n] \cup \{0\}$ as the following system of equations:
\small
\begin{equation}\label{eq:system}
\begin{pmatrix}
 0!n!& 1!(n-1)! & \cdots & n!0!  \\
 1!n!&  & \cdots & (n+1)!0!  \\
\vdots  & \vdots  & \ddots & \vdots  \\
 n!n!&  & \cdots & (2n)!0!  \\ 
\end{pmatrix}\times \begin{pmatrix}
 \beta_0(G)\\
 \beta_1(G)\\
\vdots  \\
\beta_n(G)\\
 \end{pmatrix} =
\begin{pmatrix}
 \kappa_{y_0}(\MG(G_0))-C(0)\\
\kappa_{y_1}(\MG(G_1))-C(1)\\
\vdots  \\
\kappa_{y_n}(\MG(G_n))-C(n)\\
 \end{pmatrix} 
\end{equation}
\normalsize

Denote by $A$ the $(n+1) \times (n+1)$ matrix in the above equation.
Recall that a scalar multiplication of a column by a constant $c$ multiplies the determinant by $c$.
Therefore, $A$ is nonsingular if and only if nonsingularity also holds for the $(n+1) \times (n+1)$ matrix $B$, defined by $B_{ij} = (i+j)!$. 
$B$ is a matrix that is related to Pascal's triangle, and it is known that its determinant is equal to $\prod_{i=0}^n {i!}^2 \neq 0$~\citep{Bach02a,ALPS09b}. It follows that $A$ is nonsingular, so our system of equations (\ref{eq:system}) is linearly independent and has a unique solution. Note that all entries in the system can be computed in polynomial time (assuming that the Shapley value of a matching game is polynomial time computable): The constants $C(i)$ consist of polynomially many terms, and all factorials and binomial coefficients that occur in ($\ref{eq:system}$) are taken over numbers of magnitude polynomial in $n$. 

Therefore, we can use Gaussian elimination to solve (\ref{eq:system}) in $O(n^3)$ time. 
 It follows that for all $i \in [n]$, $\beta_i(G)$ can be computed in polynomial time, and hence $\alpha_i(G)$ can be computed in polynomial time. Therefore, if there exists an algorithm that solves {\sc Shapley} in polynomial time, then it can also be used to solve {\sc \#MatchableSubgraphs$_k$} in polynomial time. 
\end{proof}



%
%
%

\section{An approximation algorithm}

In this section, we show that although computing exactly the Shapley value of matching games is a hard problem, approximating it is much easier.

Let $\Sigma$ be a finite alphabet in which we agree to describe our problem instances and solutions. 
A \emph{fully polynomial time randomized approximation scheme (FPRAS)} for a function $f : \Sigma^* \rightarrow \mathbb{Q}$ is an algorithm that takes input $x \in \Sigma^*$ and a parameter $\epsilon \in \mathbb{Q}_{> 0}$, and returns with probability at least $\frac34$ a number in between $f(x)/(1+\epsilon)$ and $(1+\epsilon) f(x)$. Moreover, an FPRAS is required to run in time polynomial in the size of $x$ and $1/\epsilon$. The probability of $\frac34$ is chosen arbitrarily: by a standard amplification technique, it can be replaced by an arbitrary number $\delta \in [0,1]$. The resulting algorithm would then run in time polynomial in $n, 1/\epsilon$, and $\log(1/\delta)$. 

We will now formulate an algorithm that approximates the raw Shapley value of a player in a weighted matching game, and show that it is an FPRAS. Note that we cannot utilize approximation results in 
\citep{NSW+11a} and \citep{BMR+10a} since matching games are neither convex nor simple.
Our FPRAS is based on Monte Carlo sampling, and works as follows: Let $(G = (N,E,w), i, \epsilon)$ be the input, where $G$ is the weighted graph representing matching game $\MG(G)$, $i \in N$ is a player in $\MG(G)$, and $\epsilon$ is the precision parameter. For notational convenience, we write $\kappa_i$ as a shorthand for $\kappa_i(\MG(G))$. The algorithm first determines whether $\kappa_i = 0$ (Observation \ref{zeroshapley}). If so, then it outputs $0$ and terminates. If not, then it samples $\lceil 4n^2(n-1)^2/\epsilon^2 \rceil$ permutations of the player set uniformly at random. Denote this multiset of sampled permutations by $P$. The algorithm then outputs the average marginal contribution of player $i$ over the permutations in $P$ and terminates. Note that this average marginal contribution is efficiently computable: it is given by $1/\lceil 4n^2(n-1)^2 / \epsilon^2 \rceil$ times the sum of the marginal contributions of player $i$ to each of the sampled permutations. Determining these marginal 
contributions can be done in 
polynomial time, using any maximum weight matching algorithm. Denote our sampling algorithm by \emph{\textsc{MatchingGame-Sampler}}.

\textsc{MatchingGame-Sampler} resembles the algorithms in \citep{MaSh60a,NSW+11a}: the differences are that the algorithm takes a different number of samples, and that it determines whether the Shapley value of player $i$ is $0$ prior to running the sampling procedure. Moreover, its proof of correctness requires different insights.\footnote{To be precise, this applies only to \citep{NSW+11a}. For the sampling algorithm in \cite{MaSh60a}, no proof or approximation-quality analysis of any kind is given.}

\begin{theorem}
\textsc{MatchingGame-Sampler} is an FPRAS for the raw Shapley value in a weighted matching game.
\end{theorem}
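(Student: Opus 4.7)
The plan is to couple a standard Chebyshev-type concentration inequality with a structural lower bound on $\varphi_i(\MG(G))$ that is specific to matching games. Since $\kappa_i = n! \cdot \varphi_i$ and multiplicative $(1\pm\epsilon)$-approximation is preserved under positive scaling, it is enough to analyze the sample mean $\bar X$ returned by \textsc{MatchingGame-Sampler} as an estimator of $\varphi_i$. By Observation~\ref{zeroshapley} the algorithm handles $\varphi_i = 0$ correctly, so we may assume $i$ has at least one incident edge and set $W = \max_{\{i,j\} \in E} w(\{i,j\})$.

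First I would establish that each sampled marginal contribution $X_t$ lies in $[0, W]$: nonnegativity follows from monotonicity of the matching-game value, and the upper bound follows because deleting the at most one edge incident to $i$ in a maximum matching of $G(p(i,\sigma)\cup\{i\})$ leaves a matching in $G(p(i,\sigma))$ of weight at least $X_t \cdot (-1) + v(p(i,\sigma)\cup\{i\})$. Consequently $\mathrm{Var}(X_t) \leq W^2/4$ by Popoviciu's inequality. The second and conceptually central ingredient is the lower bound $\varphi_i \geq W/(n(n-1))$: letting $j^*$ be a neighbour of $i$ with $w(\{i,j^*\}) = W$, the $(n-2)!$ permutations that place $j^*$ first and $i$ second each contribute exactly $W$ to $\kappa_i$, while every other permutation contributes a nonnegative amount by monotonicity, so $\kappa_i \geq (n-2)!\,W$.

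With these two ingredients in hand, Chebyshev's inequality applied to the i.i.d.\ samples $X_1, \ldots, X_T$ with $T = \lceil 4n^2(n-1)^2/\epsilon^2 \rceil$ yields
$$\Pr\bigl[\,|\bar X - \varphi_i| > (\epsilon/2)\,\varphi_i\,\bigr] \;\leq\; \frac{4\,\mathrm{Var}(X_t)}{T\,\epsilon^2\,\varphi_i^2} \;\leq\; \frac{n^2(n-1)^2}{T\,\epsilon^2} \;=\; \frac14.$$
A short algebraic check shows that, whenever $\epsilon \leq 1$, the event $|\bar X - \varphi_i| \leq (\epsilon/2)\,\varphi_i$ implies $\bar X \in [\varphi_i/(1+\epsilon),\,(1+\epsilon)\varphi_i]$, which is the required multiplicative guarantee. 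Each $X_t$ is computable in polynomial time using any polynomial maximum-weight matching algorithm, $T$ is polynomial in $n$ and $1/\epsilon$, and the preprocessing step is linear, so the overall running time is polynomial.

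The main obstacle, I expect, is the structural lower bound $\varphi_i \geq W/(n(n-1))$: converting the absolute concentration bound into a multiplicative one requires precisely this matching-game-specific estimate, and without it the sample size $T$ would have to depend on $1/\varphi_i$, which cannot be bounded a priori. The remaining steps (variance bound, Chebyshev, and the elementary reduction from additive to multiplicative error) are routine, and the success probability can subsequently be amplified from $3/4$ to any desired $1-\delta$ by the standard median-of-repetitions trick.
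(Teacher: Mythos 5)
Your proposal is correct and follows essentially the same route as the paper's proof: Monte Carlo sampling over permutations, the bound $0 \leq X_t \leq w_i^{\max}$ on marginal contributions, the key structural lower bound $\kappa_i \geq (n-2)!\,w_i^{\max}$ obtained from the $(n-2)!$ permutations placing the heaviest neighbour first and $i$ second, and Chebyshev's inequality with $\lceil 4n^2(n-1)^2/\epsilon^2\rceil$ samples. Your only deviations are cosmetic refinements (working with $\varphi_i$ rather than $\kappa_i$, Popoviciu's inequality for a factor-$4$ tighter variance bound, and the explicit $\epsilon/2$ slack to convert the symmetric additive guarantee into the $[f/(1+\epsilon),(1+\epsilon)f]$ form, a detail the paper glosses over).
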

\begin{proof}
Denote by $\bar{\kappa}_i$ the output of the algorithm. If $\kappa_i = 0$, then \textsc{MatchingGame-Sampler} is guaranteed to output the right solution, so assume that $\kappa_i > 0$. Let $w_i^{\max}$ be the maximum weight among the edges attached to $i$, and let $e_i^{\max} \in E$ be an edge that is attached to $i$ such that $w(e_i^{\max}) = w_i^{\max}$. Let $X$ be a random variable that takes the value of $n!$ times the marginal contribution of player $i$ in a uniformly randomly sampled permutation of the players. Note that $\mathbf{E}[X] = \kappa_i$. Note that the marginal contribution of a player in any permutation is at most $w_i^{\max}$, so $X$ is at most $w_i^{\max}n!$.

Let $j$ be the neighbor of $i$ connected by $e_i^{\max}$. Observe that any permutation in which $j$ is positioned first, and $i$ is positioned second, is a permutation for $i$ in which the marginal contribution of $i$ is $w_i^{\max}$. There are $(n-2)!$ such permutations, so the raw Shapley value $\kappa_i$ of $i$ is at least $w_i^{\max}(n-2)!$. For the variance of $X$ we obtain 
\begin{align*}
\mathbf{Var}[X] & = \mathbf{E}[X^2] - \mathbf{E}[X]^2 \leq \mathbf{E}[X^2] \leq (w_i^{\max})^2 n!^2 \leq n^2 (n-1)^2 \kappa_i^2.
\end{align*}
Observe that $\bar{\kappa}_i$ is a random variable that is equal to $\frac{\sum_{j = 1}^{\lceil 4n^2(n-1)^2 / \epsilon^2 \rceil} X_j}{\lceil 4n^2(n-1)^2 / \epsilon^2 \rceil} , $
where $X_j$ are independent random variables with the same distribution as $X$. From this we obtain that $\mathbf{E}[\bar{\kappa}_i] = \mathbf{E}[X] = \kappa_i$.
The desired approximation guarantee then follows from Chebyshev's inequality,\footnote{One could also apply Hoeffding's inequality, but this will not result in an asymptotically better bound.} and completes the proof:
\begin{eqnarray*}
\mathbf{Pr}[|\bar{\kappa}_i - \kappa_i| \geq \epsilon \kappa_i] & \leq & \frac{\mathbf{Var}[\bar{\kappa}_i]}{\epsilon^2 \kappa_i^2} 
= \frac{\mathbf{Var}\left[\frac{1}{\lceil 4n^2(n-1)^2 / \epsilon^2 \rceil}\sum_{j = 1}^{\lceil 4n^2(n-1)^2 / \epsilon^2 \rceil} X_j\right]}{\epsilon^2 \kappa_i^2} \\
& = & \frac{\left(\frac{\mathbf{Var}[X]}{\lceil 4n^2(n-1)^2 / \epsilon^2 \rceil}\right)}{\epsilon^2 \kappa_i^2}
\leq  \frac{n^2(n-1)^2\kappa_i^2}{(4n^2(n-1)^2 / \epsilon^2) \cdot \epsilon^2 \kappa_i^2} \leq \frac14 .
\end{eqnarray*}
\end{proof}
\begin{corollary}
The algorithm that runs \textsc{MatchingGame-Sampler} and returns its output scaled down by $1/n!$, is an FPRAS for the Shapley value of a weighted matching game.
\end{corollary}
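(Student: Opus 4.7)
The plan is to observe that the corollary follows almost immediately from the preceding theorem together with the definition of the Shapley value, so the proof will be short and mostly verification that the FPRAS conditions carry over under the scaling.

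First I would recall that by the definition given in equation (\ref{eq:rawshapley1}), we have $\varphi_i(\MG(G)) = \kappa_i(\MG(G))/n!$. Let $\bar{\kappa}_i$ denote the output of \textsc{MatchingGame-Sampler} on input $(G,i,\epsilon)$, and let $\bar{\varphi}_i = \bar{\kappa}_i / n!$ denote the output of the scaled algorithm. The previous theorem guarantees that with probability at least $3/4$, $\kappa_i/(1+\epsilon) \leq \bar{\kappa}_i \leq (1+\epsilon)\kappa_i$. Dividing each side of this chain of inequalities by $n!$ (a positive constant) preserves the inequalities, so with the same probability we obtain $\varphi_i/(1+\epsilon) \leq \bar{\varphi}_i \leq (1+\epsilon)\varphi_i$, which is exactly the multiplicative approximation required of an FPRAS.

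Next I would verify the running time condition. The scaling step amounts to dividing the rational number $\bar{\kappa}_i$ (which, given how it is computed from $\lceil 4n^2(n-1)^2/\epsilon^2\rceil$ sampled marginal contributions, is already a rational of polynomial bit-length) by $n!$, whose binary representation has bit-length $O(n\log n)$. Hence the total runtime is still polynomial in $n$ and $1/\epsilon$, and the success probability is unchanged.

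There is no real obstacle here; the only point worth making explicit is that a multiplicative approximation guarantee is invariant under scaling by a positive constant, and that representing the scalar $1/n!$ does not blow up the output size. The corollary then follows from the theorem above.
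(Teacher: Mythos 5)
Your proof is correct and matches the paper's intent: the corollary is stated without an explicit proof precisely because it follows immediately from $\varphi_i = \kappa_i/n!$ and the invariance of a multiplicative approximation guarantee under scaling by a positive constant. Your additional remarks on the bit-length of $n!$ and the preserved success probability are accurate and only make the implicit argument explicit.
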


Observe that \textsc{MatchingGame-Sampler} is an FPRAS in the strong sense that its running time does not depend on the weights of the edges.
Due to the $\mathsf{\#P}$-completeness result stated in Theorem \ref{lemma:countinghard}, this FPRAS is the best one can hope for, and provides us with a complete answer to the precise complexity of this problem (based on our best judgment).

\paragraph{Acknowledgements.}
The authors thank Ross Kang for various helpful discussions.
				

\begin{thebibliography}{22}
\providecommand{\natexlab}[1]{#1}
\providecommand{\url}[1]{\texttt{#1}}
\expandafter\ifx\csname urlstyle\endcsname\relax
  \providecommand{\doi}[1]{doi: #1}\else
  \providecommand{\doi}{doi: \begingroup \urlstyle{rm}\Url}\fi

\bibitem[Alkan and Gale(1990)]{AlGa90a}
A.~Alkan and D.~Gale.
\newblock The core of the matching game.
\newblock \emph{Games and Economic Behavior}, 2\penalty0 (3):\penalty0
  203--212, 1990.

\bibitem[Aziz et~al.(2009)Aziz, Lachish, Paterson, and Savani]{ALPS09b}
H.~Aziz, O.~Lachish, M.~Paterson, and R.~Savani.
\newblock Power indices in spanning connectivity games.
\newblock In \emph{Proc.~of 5th International Conference on Algorithmic Aspects
  in Information and Management (AAIM)}, volume 5564 of \emph{LNCS}, pages
  55--67. Springer, 2009.

\bibitem[Bacher(2002)]{Bach02a}
R.~Bacher.
\newblock Determinants of matrices related to the pascal triangle.
\newblock \emph{Journal de th{\'e}orie des nombres de Bordeaux}, 14:\penalty0
  19--41, 2002.

\bibitem[Bachrach et~al.(2010)Bachrach, Markakis, Resnick, Procaccia,
  Rosenschein, and Saberi]{BMR+10a}
Y.~Bachrach, E.~Markakis, E.~Resnick, A.~D. Procaccia, J.~S. Rosenschein, and
  A.~Saberi.
\newblock Approximating power indices: theoretical and empirical analysis.
\newblock \emph{Autonomous Agents and Multi-Agent Systems}, 20:\penalty0
  105--122, 2010.

\bibitem[Brandes and Erlebach(2005)]{BrEr05a}
U.~Brandes and T.~Erlebach, editors.
\newblock \emph{Network Analysis}, volume 3418 of \emph{LNCS}.
\newblock Springer, 2005.

\bibitem[Colbourn et~al.(1995)Colbourn, Provan, and Vertigan]{CPV95a}
C.~J. Colbourn, J.~S. Provan, and D.~Vertigan.
\newblock The complexity of computing the {T}utte polynomial on transversal
  matroids.
\newblock \emph{Combinatorica}, 15\penalty0 (1):\penalty0 1--10, 1995.

\bibitem[Deng and Fang(2008)]{DeFa08a}
X.~Deng and Z.~Fang.
\newblock Algorithmic cooperative game theory.
\newblock In A.~Chinchuluun, P.~M. Pardalos, A.~Migdalas, and L.~Pitsoulis,
  editors, \emph{Pareto Optimality, Game Theory And Equilibria}, volume~17 of
  \emph{Springer Optimization and Its Applications}. Springer-Verlag, 2008.

\bibitem[Deng and Papadimitriou(1994)]{DePa94a}
X.~Deng and C.~H. Papadimitriou.
\newblock On the complexity of cooperative solution concepts.
\newblock \emph{Mathematics of Operations Research}, 12\penalty0 (2):\penalty0
  257--266, 1994.

\bibitem[Deng et~al.(1999)Deng, Ibaraki, and Nagamochi]{DIN99a}
X.~Deng, T.~Ibaraki, and H.~Nagamochi.
\newblock Algorithmic aspects of the core of combinatorial optimization games.
\newblock \emph{Mathematics of Operations Research}, 24\penalty0 (3):\penalty0
  751--766, 1999.

\bibitem[Edmonds(1965)]{Edmo65a}
J.~Edmonds.
\newblock Paths, trees and flowers.
\newblock \emph{Canadian Journal of Mathematics}, 17:\penalty0 449--467, 1965.

\bibitem[Elkind et~al.(2009)Elkind, Goldberg, Goldberg, and
  Wooldridge]{EGGW09a}
E.~Elkind, L.~A. Goldberg, P.~W. Goldberg, and M.~Wooldridge.
\newblock On the computational complexity of weighted voting games.
\newblock \emph{Annals of Mathematics and Artificial Intelligence}, 56\penalty0
  (2):\penalty0 109--131, 2009.

\bibitem[Ieong and Shoham(2005)]{IeSh05a}
S.~Ieong and Y.~Shoham.
\newblock Marginal contribution nets: A compact representation scheme for
  coalitional games.
\newblock In \emph{Proc.~of 6th ACM-EC Conference}, pages 193--202. ACM Press,
  2005.

\bibitem[Kern and Paulusma(2003)]{KePa03a}
W.~Kern and D.~Paulusma.
\newblock Matching games: The least core and the nucleolus.
\newblock \emph{Mathematics of Operations Research}, 28\penalty0 (2):\penalty0
  294--308, 2003.

\bibitem[Liben-Nowell et~al.(2011)Liben-Nowell, Sharp, Wexle, and
  Woods]{NSW+11a}
D.~Liben-Nowell, A.~Sharp, T.~Wexle, and K.~Woods.
\newblock Computing shapley value in supermodular coalitional games.
\newblock In \emph{Proc.~of 18th COCOON}, 2011.

\bibitem[Mann and Shapley(1960)]{MaSh60a}
I.~Mann and L.~S. Shapley.
\newblock Values of large games, iv: Evaluating the electoral college by
  montecarlo techniques.
\newblock Technical Report RM-2651, RAND Corporation, 1960.

\bibitem[Moulin(2003)]{Moul03a}
H.~Moulin.
\newblock \emph{Fair Division and Collective Welfare}.
\newblock The MIT Press, 2003.

\bibitem[Roth and Sotomayor(1990)]{RoSo90a}
A.~Roth and M.~A.~O. Sotomayor.
\newblock \emph{Two-Sided Matching: {A} Study in Game Theoretic Modelling and
  Analysis}.
\newblock Cambridge University Press, 1990.

\bibitem[Shapley(1953)]{Shap53c}
L.~S. Shapley.
\newblock A value for n-person games.
\newblock \emph{Annals of Math Studies}, 28:\penalty0 307--317, 1953.

\bibitem[Shapley and Shubik(1972)]{ShSh72a}
L.~S. Shapley and M.~Shubik.
\newblock {The Assignment Game I: The Core}.
\newblock \emph{International Journal of Game Theory}, 1:\penalty0 111--130,
  1972.

\bibitem[Solymosi and Raghavan(1994)]{SoRa94a}
T.~Solymosi and T.~E.~S. Raghavan.
\newblock An algorithm for finding the nucleolus of assignment games.
\newblock \emph{International Journal of Game Theory}, 23:\penalty0 119--143,
  1994.

\bibitem[Ueda et~al.(2011)Ueda, Kitaki, Iwasaki, and Yokoo]{UKI+11a}
S.~Ueda, M.~Kitaki, A.~Iwasaki, and M.~Yokoo.
\newblock Concise characteristic function representations in coalitional games
  based on agent types.
\newblock In T.~Walsh, editor, \emph{Proc.~of 22nd IJCAI}, pages 393--399. AAAI
  Press, 2011.

\bibitem[Winter(2002)]{Wint02a}
E.~Winter.
\newblock {The Shapley value}.
\newblock In \emph{Handbook of Game Theory with Economic Applications},
  chapter~53, pages {2025--2054}. Elsevier, 2002.

\end{thebibliography}

\normalsize

\newpage
\section*{Appendix}

\paragraph{Graphs and Matchings Basics.}
Given an undirected graph $G = (N,E)$ (with vertex set $N$ and edge set $E$), a \emph{matching} of $G$ is a subset $M$ of $E$ such that $e \cap e' = \varnothing$ when $e,e' \in M$, $e \not=e'$. When discussing a particular matching $M$, we refer to the edges of a matching $M$ as \emph{matched edges}, and those outside $M$ as \emph{unmatched edges}. A \emph{matched graph} is a pair $(G,M)$ where $G$ is a graph and $M$ is a matching of $G$.
A \emph{maximum matching} of $G$ is a matching of maximum cardinality among the set of all matchings of $G$.

We call a vertex $i$ \emph{exposed} or \emph{unmatched} in $(G,M)$ when $i$ is not in any edge of $M$. Otherwise, we call $i$ \emph{matched}.
An \emph{alternating path} $P$ in $(G,M)$ is a path in $G$ where the edges of $P$ alternate between edges in $M$ and edges in $E \backslash M$.
An \emph{augmenting path} $P$ (with respect to a matching $M$) is an alternating path in $G$ of which the endpoints are both exposed vertices. An augmenting path thus has odd length, starts with an unmatched edge, and ends with an unmatched edge. 
The following lemma is fundamental to matching theory:
\begin{lemma}[Berge's lemma]\label{bergeslemma}
Let $G = (V,E)$ be a graph. A matching $M$ of $G$ maximum if and only if there is no augmenting path in $G$ with respect to $M$.
\end{lemma}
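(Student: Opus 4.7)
The plan is to prove both directions of Berge's lemma by contrapositive, using in each case the symmetric difference of two matchings as the core tool. Throughout I identify a path $P$ with its edge set when convenient, and write $\triangle$ for the symmetric difference of sets.

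For the ``if'' direction, suppose $P$ is an augmenting path with respect to $M$. I would set $M' = M \triangle P$. Since $P$ alternates between unmatched and matched edges and has both endpoints exposed, it starts and ends with edges of $E \setminus M$ and hence contains exactly one more such edge than it contains edges of $M$. This yields $|M'| = |M| + 1$. I would then verify that $M'$ is a matching by examining the incidence of each vertex: vertices not on $P$ are unaffected; internal vertices of $P$ are incident to exactly two edges of $P$ (one in $M$, one outside $M$) and therefore remain incident to exactly one edge of $M'$; the endpoints of $P$, previously exposed in $M$, become matched in $M'$ by the unique $P$-edge incident to them. Hence $M$ is not a maximum matching.

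For the ``only if'' direction, assume $M$ is not maximum and let $M^\ast$ be a matching with $|M^\ast| > |M|$. Consider the graph $H = (V, M \triangle M^\ast)$. Every vertex of $H$ has degree at most two, since it is incident to at most one edge of each matching. The connected components of $H$ are therefore paths or cycles, and on each component the edges must alternate between $M$ and $M^\ast$ (otherwise two edges of the same matching would share a vertex). Each cycle component contains equally many edges of $M$ and of $M^\ast$, while each path component contains either equally many or differs by one. Since $|M^\ast| > |M|$, summing over components shows that at least one path component $Q$ must contain strictly more edges of $M^\ast$ than of $M$. Such a $Q$ begins and ends with $M^\ast$-edges, so its endpoints are exposed by $M$, which exhibits $Q$ as an augmenting path for $M$.

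The main obstacle is the ``only if'' direction: pinning down the structure of $M \triangle M^\ast$ as a disjoint union of alternating paths and cycles, and then extracting from the cardinality imbalance a path whose two terminal edges both lie in $M^\ast$. Once this structural analysis is in place the augmenting path is read off immediately; the ``if'' direction is essentially a one-line cardinality computation combined with a routine check that $M \triangle P$ is indeed a matching.
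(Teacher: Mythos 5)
Your proof is correct; it is the standard symmetric-difference argument for Berge's lemma. Note, though, that the paper does not actually prove this lemma: it is stated in the appendix as a known fundamental fact and used as a black box, so there is no in-paper proof to compare against. The closest the paper comes is (a) the remark immediately after the lemma that augmenting $M$ along an augmenting path $P$ yields a matching with one more edge, which is exactly your ``if'' direction, and (b) the sketch at the end of the appendix that decomposes the symmetric difference of two maximum matchings into disjoint alternating paths and cycles---the same structural tool you use for the ``only if'' direction, except that the paper applies it to two matchings of equal cardinality (so only even components arise), whereas you exploit the cardinality imbalance $|M^\ast| > |M|$ to extract a path component with both terminal edges in $M^\ast$. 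The one spot where your write-up glosses a needed step is the claim that the endpoints of such a component $Q$ are exposed by $M$: you should note that an endpoint $u$ of $Q$ cannot be covered by an edge of $M \cap M^\ast$ (that edge together with the terminal edge of $Q$ would give $u$ two incident $M^\ast$-edges, contradicting that $M^\ast$ is a matching) nor by an edge of $M \setminus M^\ast$ (such an edge would lie in $H$ and make $u$ have degree two in its component, so $u$ would not be an endpoint). With that sentence supplied, the argument is complete.
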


Suppose we have a matching $M$ for a graph $G$ that is not a maximum matching. Then by the above lemma, there is an augmenting path $P$. It can be seen that removing from $M$ the matched edges of $P$ and adding to $M$ the unmatched edges of $P$, gives us a bigger matching (i.e., a matching with one additional edge). We refer to this as the operation of \emph{augmenting $M$ along $P$}. Likewise, it is possible to augment a matching along an even-length alternating path with one exposed vertex and one matched vertex as endpoints. Augmenting along such a path does not increase the cardinality of the matching.

Observe that if $P$ is an alternating path that is not augmenting, then it still possible to augment the matching along $P$ iff one of the endpoints of $P$ is an exposed vertex.  
Edmonds' blossom algorithm~\citep{Edmo65a} is a polynomial time algorithm for finding a maximum weight matching in a graph.

Let $M_1$ and $M_2$ be two distinct maximum matchings for an unweighted graph $G = (V,E)$. Then $M_2$ can be obtained from $M_1$ by a sequence of augmentations along mutually disjoint even-length alternating paths and even-length alternating cycles. A rough sketch of a proof for this is as follows: We investigate the symmetric difference $D$ of $M_1$ and $M_2$, and conclude that $D$ must be a collection of disjoint even-length paths and even length cycles of which the edges alternate between edges in $M_1$ and edges in $M_2$. A cycle in $D$ must be an alternating cycle in $M_1$, and a path in $D$ must be an alternating path in $M_1$. After augmenting $M_1$ along such a cycle or path, we obtain a matching $M_3$ such that the symmetric difference between $M_3$ and $M_2$ is $D$ minus the cycle or path that we augmented. So by augmenting along all paths and cycles in $D$, we obtain $M_2$.

\newpage
\paragraph{Proof of Lemma~\ref{lemma:components}.}

We prove this for $k = 2$. For $k > 2$, the claim then holds by straightforward induction.

Therefore, let $N_1$ and $N_2$ be the vertex sets of the two connected components of $G$, and let $(N, v_{N_1}')$ and $(N, v_{N_2}')$ denote the matching game obtained by removing from the graph all edges among vertices in respectively $N_1$ and $N_2$. Note that $(V,v)$ is the sum of $v_{N_1}'$ and $v_{N_2}'$. 

By the additivity property of the Shapley value, it therefore holds for every player $i$ that $\varphi_i(v) = \varphi_i(v_{N_1}') + \varphi_i(v_{N_2}')$. It suffices to show that $\varphi_i(v_{N_1}') = \varphi_i(v_{N_2})$ for all $i \in N_1$ and that $\varphi_i(v_{N_2}') = \varphi_i(v_{N_2})$ for all $i \in N_2$. We do this by showing that $\varphi_i(V) = \varphi_i(V \cup \{j\})$, where $j$ is an arbitrary player from $U$. The claim for $k=2$ then follows by induction and symmetry. 

The fact that $\varphi_i(V) = \varphi_i(V \cup \{j\})$ holds, follows from the following derivation:
\begin{eqnarray*}
&   & \varphi_i(V) \\
& = & \frac{1}{|V|!} \sum_{S : S \subseteq V\backslash\{i\}} |S|!(|V|-|S|-1)!(v(S \cup \{i\})-v(S)) \\
& = & \frac{1}{(|V| + 1)!} \sum_{S : S \subseteq V \backslash \{i\}} (|S| + 1 + |V| - |S|)|S|!(|V| - |S| - 1)!(v(S \cup \{i\}) - v(S)) \\
& = & \frac{1}{|V \cup \{j\}|!} \sum_{S : S \subseteq V \backslash \{i\}} (|S|+1)!(|V| - |S| - 1)!(v(S \cup \{i\}) - v(S)) \\
&   &+ \frac{1}{|V \cup \{j\}|!} \sum_{S : S \subseteq V \backslash \{i\}} |S|!(|V|-|S|)!(v(S \cup \{i\}) - v(S)) \\
& = & \frac{1}{|V \cup \{j\}|!} \sum_{S : S \subseteq (V \cup \{j\}) \backslash \{i\}, j \in S} |S|!(|V \cup \{j\}| - |S| - 1)!(v(S \cup \{i\}) - v(S)) \\
&  &+ \frac{1}{|V \cup \{j\}|!} \sum_{S : S \subseteq (V \cup \{j\}) \backslash \{i\}, j \not\in S} |S|!(|V \cup \{j\}|-|S| - 1)!(v(S \cup \{i\}) - v(S)) \\
& = & \frac{1}{|V \cup \{j\}|!} \sum_{S : S \subseteq (V \cup \{j\}) \backslash \{i\}} |S|!(|V \cup \{j\}| - |S| - 1)!(v(S \cup \{i\}) - v(S)) \\
& = & \varphi_i(V \cup \{j\}).
\end{eqnarray*}
\qed

\newpage

\paragraph{Proof of Lemma~\ref{lemma:basic}.}

Let $\eta_i^s$ be the number of coalitions of size $s$ for which a vertex $i$ is pivotal.
\begin{eqnarray*}
\varphi_i(v) & = & \frac{1}{|V|!} \sum_{S : S \subseteq V\backslash\{i\}} |S|!(|V|-|S|-1)!(v(S \cup \{i\})-v(S)) \\
 & = & \frac{1}{|V|!} \sum_{s=1}^{|V|-1}\sum_{\substack{S : S \subseteq V\backslash\{i\}\\ |S|=s}}s!(|V|-s-1)!(v(S \cup \{i\})-v(S))\\
 & = & \frac{1}{|V|!} \sum_{s=1}^{|V|-1} s!(|V|-s-1)!\sum_{\substack{S : S \subseteq V\backslash\{i\})\\ |S|=s}}(v(S \cup \{i\})-v(S))\\
 & = & \frac{1}{|V|!} \sum_{s=1}^{|V|-1} s!(|V|-s-1)!\eta_i^s.
\end{eqnarray*}

Therefore, the problem of computing the Shapley value reduces to computing $\eta_i^s$ for all $s\in [0,\ldots, |V-1|]$. \qed

\newpage

\paragraph{Proof of Lemma \ref{lem:char} for even $i$.}
($\Rightarrow$) Let $M$ be a maximum matching for $G_i(S)$. $S$ is pivotal, so $M$ is not a perfect matching. We can assume though, that all vertices $\{y_0, \ldots, y_{i-1}\}$ are matched to each other in the matched graph $(G_i(S), M)$, because $G_i(\{y_0, \ldots, y_{i-1}\})$ is a linear graph with an even number of vertices, and is thus perfectly matchable. It follows that the exposed nodes of $(G_i(S), M)$ are all in $N$, and therefore the matching $M$ restricted to $N$ is a maximum matching for $G(S \backslash \{y_0, \ldots, y_{i-1}\}) = G(S \cap N)$ that is non-perfect. 

($\Leftarrow$) Let $M$ be a maximum (non-perfect) matching for $G(S \cap N)$ and let $y$ be an exposed vertex of $(G(S \cap N), M)$. Then $M' = M \cup \{\{y_j, y_{j + 1}\} : j \text{ even} \wedge j < i\}$ is a maximum matching for $G_i(S)$, by Berge's lemma (Lemma \ref{bergeslemma}), as it is clear that there is no augmenting path in $(G_i(S), M')$. Moreover, observe that in $(G_i(S), M')$ there is an even-length alternating path from $y$ to $y_{i-1}$. Therefore, there is in $(G_i, M')$ an augmenting path from $y$ to $y_i$, and it follows again by Berge's lemma that $S$ is pivotal. 

\paragraph{Proof of Lemma \ref{lem:char} for odd $i$.}

($\Rightarrow$) Let $M'$ be a maximum matching for $G_i(S)$. $S$ is pivotal, so in $(G_i(S),M')$ there is an even-length alternating path $P$ from an exposed node $y$ to $y_{i-1}$. Obtain the matching $M$ by augmenting $M'$ along $P$. $M$ is then a maximum matching for $G_i(S)$ in which $y_{i-1}$ is exposed. $G_i(\{y_0, \ldots, y_{i-1}\})$ is a linear graph and $M$ is maximum, so it follows that $y_{i-1}$ is the only exposed node in $(G_i(S),M)$ among $\{y_0, \ldots y_{i-1}\}$. Therefore $S \cap N$ must be matched to each other in $(G(S), M)$ (for otherwise, in $(G_i(S),M)$ there would be an augmenting path from $y_{i-1}$ to an exposed node of $S \cap N$, contradicting the fact that $M$ is a maximum matching for $G_i(S)$). It follows that $G(S \cap N)$ is perfectly matchable.

($\Leftarrow$) Let $M$ be a maximum perfect matching for $G(S \cap N)$. Let $M'$ be a maximum matching for $G_i(\{y_0, \ldots, y_{i-1}\})$ in which $y_{i-1}$ is the only exposed node. Then $M \cup M'$ is a matching for $G_i(S)$ in which $y_{i-1}$ is the only exposed node. $M \cup M'$ is clearly a maximum matching, and in $(G_i, M \cup M')$ the edge $\{y_{i-1}, y_i\}$ is exposed. So $S$ is pivotal. \qed

\end{document}